\newtheorem{theorem}{Theorem}
\newtheorem{proposition}[theorem]{Proposition}
\theoremstyle{definition}
\newtheorem{problem}{Problem}
\newtheorem{remark}{Remark}
\newcommand{\beq}{\begin{equation}}
\newcommand{\eeq}{\end{equation}}
\newcommand{\HH}{\mathbb{H}}
\newcommand{\cZ}{\mathcal{Z}}
\newcommand{\rmd}{\mathrm {d}}
\newcommand{\rmi}{\mathrm {i}}
\newcommand{\sds}{\strut\displaystyle}
\def\staccrel#1#2{\mathrel{\mathop{#1}\limits_{#2}}}
\newcommand{\rmE}{\mathrm{E}\,}
\newcommand{\rmR}{\mathrm{R}\,}
\newcommand{\N}{\mathbb{N}}
\newcommand{\I}{\mathbb{I}}
\newcommand{\cE}{\mathcal{E}}
\newcommand{\cI}{\mathcal{I}}
\newcommand{\cN}{\mathcal{N}}
\newcommand{\txi}{\widetilde{\xi}}
\newcommand{\teta}{\widetilde{\eta}}
\newcommand{\tpsi}{\widetilde{\psi}}
\newcommand{\og}{\mathfrak{g}}
\newcommand{\of}{\mathfrak{f}}
\newcommand{\oJ}{\mathfrak{J}}
\begin{document}

\title[Inverse optical imaging]
{Inverse optical imaging viewed as a backward channel communication problem}

\author[E. De Micheli]{Enrico De Micheli}
\address{\sl IBF -- Consiglio Nazionale delle Ricerche \\ Via De Marini, 6 - 16149 Genova, Italy}
\email{enrico.demicheli@cnr.it}

\author[G. A. Viano]{Giovanni Alberto Viano}
\address{\sl Dipartimento di Fisica -- Universit\`a di Genova,\\
Istituto Nazionale di Fisica Nucleare -- Sezione di Genova, \\
Via Dodecaneso, 33 - 16146 Genova, Italy}
\email{viano@ge.infn.it}

\begin{abstract}
The inverse problem in optics, which is closely related to the classical
question of the \emph{resolving power}, is reconsidered as a communication channel
problem. The main result is the evaluation of the maximum number $M_\varepsilon$
of $\varepsilon$--distinguishable messages ($\varepsilon$ being a bound
on the noise of the image) which can be conveyed back from the image to reconstruct the object.
We study the case of coherent illumination. By using the concept of
Kolmogorov's $\varepsilon$--capacity, we obtain:
$M_\varepsilon \sim 2^{\,S\,\log(1/\varepsilon)}\xrightarrow[\varepsilon\to 0] {} \infty$,
where $S$ is the \emph{Shannon number}. Moreover, we show that the
$\varepsilon$--capacity in inverse optical imaging is nearly equal to the amount
of information on the object which is contained in the image. We thus compare
the results obtained through the classical information theory, which is
based on the probability theory, with those derived from
a form of \emph{topological information theory}, based on
Kolmogorov's $\varepsilon$--entropy and $\varepsilon$--capacity, which are
concepts related to the evaluation of the \emph{massiveness}
of compact sets.
\end{abstract}


\maketitle


\section{Introduction}
\label{se:introduction}
The definition of the \emph{resolving power} of an optical system is a classical problem of optics with a very
long history, which goes back to Lord Rayleigh. It is precisely his criterion for resolution
which is a milestone in this theory. As is well--known, however, this criterion remains
somehow empirical, and it is sometimes considered a \emph{quite arbitrary choice}.

According to geometrical optics, the image of a point source provided by an optical
instrument is a perfectly sharp point. However, because of diffraction effects,
the image of a point is not a point but a small light patch, called the diffraction pattern.
Optical instruments, whose diffraction effects are important, are called diffraction--limited
imaging systems, and hereafter we shall refer to only this type of optical systems.

We assume, for simplicity, that the scalar theory of light can be used. In this theory
monochromatic light is represented by a scalar function, which is usually written as a complex--valued
function, called the complex amplitude, whose modulus and phase are respectively the amplitude and the
phase of the light disturbance. In the case of spatially coherent illumination
(for short, coherent illumination) the relative phase of two object points
is constant in time, i.e. even if the two phases can vary randomly in time, they vary in
an identical fashion.

We consider systems producing real (non virtual) images. We also assume that the system is
isoplanatic, i.e., space--invariant. In practice, optical imaging systems are seldom isoplanatic
over the whole object field, but it is also possible to divide the object field into regions
within which the system is approximately space--invariant. Finally, we assume that the
magnification factor of the optical system has been reduced to one by a suitable re--scaling of the
space variables of the image plane.

Diffraction--limited imaging systems are usually treated by Fourier methods, and the
corresponding theory is called Fourier optics. Assume that $f(x)$ denote the complex
amplitude distribution of a coherently illuminated object; for reasons of simplicity
but without loss of generality, we limit ourselves to consider unidimensional objects.
The Fourier transform of $f(x)$,
\beq
F(\omega) = \frac{1}{\sqrt{2\pi}}\int_{-\infty}^{+\infty} f(x)\,e^{-\rmi\omega x}\,\rmd x,
\label{1.1}
\eeq
is an entire function in the complex $\omega$--plane since $f(x)$ is
space--limited. Then one could argue (as observed by several authors
\cite{Wolter,Toraldo}) that, even though the knowledge of the function $F(\omega)$ is limited to
the finite interval $|\omega|\leqslant\Omega$ since the pupil stops all the waves with $\omega$
larger than a positive constant $\Omega$, nevertheless, in view of the uniqueness of the analytic continuation,
one could determine uniquely $F(\omega)$ everywhere. Hence, the object could be reconstructed
in all its details, and there should be no loss of information in passing through the optical system: in principle,
analytic continuation in the frequency domain will allow for restoration
of unlimited details \cite{Wolter}. But the uniqueness of the analytic continuation
does not imply its stability, namely, a continuous dependence of the solution on the data.
The ill--posedness \cite{Hadamard} of the analytic continuation, and more generally of the
inverse problem, was then recognized \cite{Viano1}, and the theory of the regularization of
the ill--posed problems in the sense of Hadamard was extensively applied to Fourier optics \cite{Bertero1}.

\begin{figure}[tb] 
\begin{center}
\leavevmode
\includegraphics[scale=0.4]{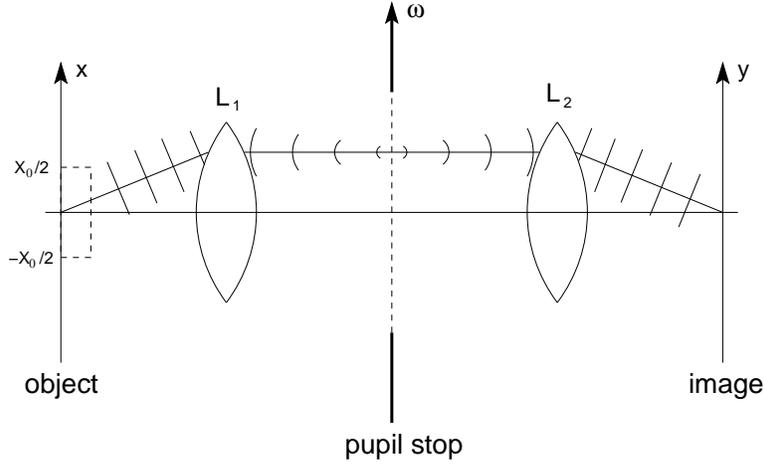}
\caption{\label{fig:1} Schematic of coherent light image formation in a one--dimensional diffraction--limited optical
system (see also \cite{Toraldo}).} 
\end{center}
\end{figure}

The mathematical inverse problem in optics, within the scheme outlined above,
can be formulated as follows. Consider an unidimensional object and refer to the conventional
optical system depicted in Fig. \ref{fig:1}. A plane object, illuminated with coherent light,
gives rise to a complex amplitude distribution $f(x)$ at the front focal plane of the lens $L_1$ (see Fig. \ref{fig:1}).
A real image is formed at the rear focal plane of the lens $L_2$. Lenses $L_1$ and $L_2$ have a common
focus at the stop plane or pupil plane. Now, let us return to the Fourier transform $F(\omega)$;
if both lenses $L_1$ and $L_2$ are assumed to fulfill the sine condition \cite[p. 166]{Born},
then $\omega$ is proportional to the vertical coordinate on the pupil plane (see Fig. \ref{fig:1}).
As we have already remarked, the pupil--stop blocks all of the contributions that have $|\omega|$ larger than the positive
constant $\Omega$. As a consequence, on the image plane we will not recover exactly $f(x)$, but its
band--limited version
\beq
g(y) = \frac{1}{\sqrt{2\pi}}\int_{-\Omega}^{\Omega}F(\omega)\,e^{\rmi\omega y}\,\rmd\omega.
\label{2.1}
\eeq
Inserting the expression of $F(\omega)$ given by \eqref{1.1} into \eqref{2.1},
and assuming that the object distribution $f(x)$ vanishes outside the interval
$-X_0/2 \leqslant x \leqslant X_0/2$, we have
\beq
g(y) = \frac{1}{2\pi}\int_{-\Omega}^\Omega e^{\rmi\omega y}\,\rmd\omega
\int_{-X_0/2}^{X_0/2} f(x)\,e^{-\rmi\omega x}\,\rmd x =
\int_{-X_0/2}^{X_0/2}\frac{\sin[\Omega(x-y)]}{\pi(x-y)}\,f(x)\,\rmd x.
\label{3.1}
\eeq
The image function $g(y)$ is an entire band--limited function, and the sampling theorem guarantees
that it can be reconstructed, without loss of information, when its values are known at a set of sampling
points, chosen in arithmetic progression with difference $\pi/\Omega$;
notice that in optics, in the unidimensional situation, the Rayleigh distance $R$
(also called the \emph{Nyquist distance}) is $R=(\pi/\Omega)=\mathrm{resolution ~distance}$.
In particular, the image $g(y)$ can be reconstructed in the interval $(-X_0/2,X_0/2)$ from the knowledge of the function
on a set of $S$ points, where $S \doteq X_0/(\frac{\pi}{\Omega})=\Omega X_0/\pi$ is the \emph{Shannon number}
of the image \cite{Toraldo,Stern}. It is in this connection that several authors (notably, Toraldo di Francia \cite{Toraldo})
argued that an image can be completely determined by $S$ (complex) numbers, which are called the \emph{image degrees of freedom}.

Equation \eqref{3.1} can be re--written in operator form as follows
\beq
(Af)(y) = \int_{-X_0/2}^{X_0/2} \frac{\sin[\Omega(x-y)]}{\pi(x-y)}\,f(x)\,\rmd x = g(y)
\qquad \left(-\frac{X_0}{2}\leqslant y \leqslant \frac{X_0}{2}\right).
\label{4.1}
\eeq
Then, the problem of object restoration is equivalent to solving the Fredholm
integral equation of the first kind $Af=g$, where $A$ is a self--adjoint, non--negative and compact operator,
$g$ represents the data (the image), and $f$ is the unknown (the object distribution).
As we remarked above, this problem is ill--posed: the solution to Eq. \eqref{4.1}, even if it
is unique, does not depend continuously on the data. Small perturbations of the data, due to the noise,
produce wide oscillations in the solution, the problem needs regularization.

Summarily, we may distinguish two different approaches to regularization:
\begin{itemize}
\item[(a)] Methods that require well--defined \emph{a priori} global bounds on the solution,
and work in definite functional spaces. These methods could be called \emph{deterministic},
understanding this terminology in a wide sense \cite{Tikhonov,Groetsch}.
\item[(b)] Methods that make use of techniques taken from the theory of probability, which
can be called \emph{probabilistic} \cite{Franklin,DeMicheli1}.
\end{itemize}
In Section \ref{se:review} we shall briefly review both these methods, advancing some remarks, in particular about the
standard \emph{deterministic} regularization.

In this paper we approach the problem from a new viewpoint. The problem of reconstructing
the object from the image is regarded as a \emph{communication channel problem}. In this context
we estimate the messages which can be conveyed back from the data set
(the image) to reconstruct the signal (the object). The maximum number of these messages is limited
by the noise affecting the image. One could expect that the maximum number of these messages
tends to infinity as the noise affecting the image tends to zero. In this way the theory
can provide a precise and quantitative dependence of the resolution on the noise.

One of the main purposes of this paper is therefore to connect the regularization methods to information theory.
In Section \ref{se:capacity} we shall develop a \emph{topological information theory}, which
can be derived without making use of the tools proper of the probability theory. It is
rather based on the concepts of $\varepsilon$--entropy and $\varepsilon$--capacity, introduced
by Kolmogorov \cite{Kolmogorov}. The main result which we obtain is summarized by the following formula
\beq
M_\varepsilon \simeq 2^{\,S\log(1/\varepsilon)},
\label{5.1}
\eeq
where $M_\varepsilon$ is the maximum number of $\varepsilon$--distinguishable messages which can
be conveyed back through the channel from the noisy data set (the image) to recover the object;
$S=\Omega X_0/\pi$ is the \emph{Shannon number} (introduced above), $\varepsilon$ is a bound
on the noise affecting the image, and $\log x$ stands here and throughout the paper for the logarithm of $x$ to the base $2$.
In Section \ref{se:comparison}, instead, we develop an approach based on the
\emph{probabilistic information theory}, that is, the information theory which follows from the use of
\emph{probabilistic} methods \cite{Gabor}.
This allows us to compare the results obtained by the topological and the probabilistic information theory;
in particular, we can interpret the bounds on the information content of the image in terms of spectral distribution
of the noise and of the object.

\section{Review and remarks on regularization methods}
\label{se:review}
Equation \eqref{4.1} is a Fredholm equation of the first kind, and the operator $A$ is acting as follows:
$A:X\to Y$, where $X$ and $Y$ are the solution and the data space, respectively. We take here, for simplicity
and without loss of generality, $X=Y=L^2(-X_0/2,X_0/2)$. As we said in the Introduction,
the operator $A$ is self--adjoint, non--negative, and compact. Moreover, the unique solution of the
equation $Af=0$ is $f=0$. Then, we can say that the integral operator $A$ admits a complete set of orthogonal
eigenfunctions $\{\psi_k\}_{k=0}^\infty$ corresponding to a countably infinite set of real positive
eigenvalues $\lambda_0>\lambda_1>\lambda_2>\cdots$; moreover, $\lim_{k\to+\infty}\lambda_k=0$.
The properties of this integral operator have been already studied by several authors \cite{Slepian1,Slepian2,Slepian3,Frieden},
and the literature on this topic is quite extensive. Suppose that $S=\Omega X_0/\pi$ is sufficiently
large, then the eigenvalues $\lambda_k$ form a decreasing sequence $1>\lambda_0>\lambda_1> \cdots>0$, which enjoys
a step--like behavior, i.e., they are approximately equal to 1 for $k\lesssim S$, and then fall off to
zero exponentially (see Fig. \ref{fig:2} and Refs. \cite{Toraldo,Frieden}).
Since $A:X\to Y$ is compact then the range $\rmR(A)$ is not closed in the data space $Y$. Therefore, given a
data function $g\in Y$, it does not necessarily follow that there exists a solution $f\in X$.
Moreover, even if two data functions $g_1$ and $g_2$ belong to $\rmR(A)$ and their distance in $Y$ is
small, nevertheless the distance between $A^{-1}g_1$ and $A^{-1}g_2$ can be unlimited large, in view of the fact
that the inverse of the compact operator $A$ is not bounded ($X$ and $Y$ being infinite dimensional spaces).

\begin{figure}[b] 
\begin{center}
\leavevmode
\includegraphics[bb=30 0 387 336,scale=0.6]{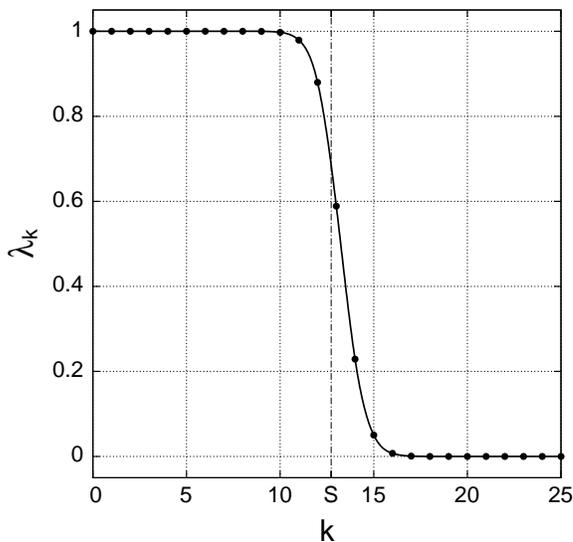}
\caption{\label{fig:2} The eigenvalues $\lambda_k$ (filled dots) of the kernel in \protect\eqref{3.1} with Shannon number $S=12.7$.}
\end{center}
\end{figure}

Since there always exists some inherent noise in the data, instead of \eqref{4.1} we have to deal with the following
equation
\beq
Af + n = \og \qquad (\og = g+n),
\label{1.2}
\eeq
where $n$ denotes the noise. Here we have assumed a purely additive model of noise, and hereafter we suppose that $n$ is a \emph{small}
perturbation of the data function, in order to still have $\og\in \rmR(A)$.

\subsection{Deterministic regularization methods}
\label{subse:deterministic}
Several methods of regularization have been proposed (see \cite{Tikhonov,Groetsch} and references quoted therein);
all of them aim at modifying one of the elements of the triplet $\{A,X,Y\}$, where $A$ is the integral operator
defined by \eqref{4.1}, and $X$ and $Y$ are the solution and data space, respectively (here we continue to assume
$X=Y=L^2(-X_0/2,X_0/2)$). Among these methods the procedure which is probably the most popular
consists in looking for the solution in a compact subset of the solution space $X$; then continuity
of the inverse operator follows from compactness. This restriction of the solution space, which ultimately leads to
a compact subset of $X$, is realized by means of suitable \emph{a priori} bounds that should represent some prior
knowledge on the solution. More precisely, in addition to the inequality
\beq
\|Af-\og\|_Y \leqslant \varepsilon \qquad (\varepsilon = \mathrm{constant}),
\label{2.2}
\eeq
which corresponds to a bound on the noise,
one also assume an \emph{a priori} bound on the solution of the following form
\beq
\|Bf\|_{\cZ} \leqslant E \qquad (E = \mathrm{constant}),
\label{3.2}
\eeq
where $\cZ$ denotes the \emph{constraint space}, and $B$ is the \emph{constraint operator}.
From bounds \eqref{2.2} and \eqref{3.2} we are led to determine the minimum of the following functional
\beq
\Phi(f) = \|Af-\og\|^2_Y + \mu^2 \, \|Bf\|_{\cZ}^2 \qquad \left(\mu = \frac{\varepsilon}{E}\right),
\label{4.2}
\eeq
which can be proved to be a regularized solution \cite{Magnoli}. Let $A^*$ denote the adjoint operator of $A$.
We take as constraint operator $B$ a self--adjoint operator; moreover, we assume
that $B^* B$ and $A^* A$ commute (this assumption does not restrict significantly
the theory and the applications). The space $\cZ$ is then composed of those functions $f \in L^2(-X_0/2,X_0/2)$
such that $\|Bf\|_{\cZ}$ is finite, i.e.,
\beq
\|Bf\|_{\cZ} = \left(\sum_{k=0}^\infty \beta_k^2 |f_k|^2 \right)^{1/2} < E \qquad (E = \mathrm{constant}),
\label{5.2}
\eeq
where $f_k=(f,\psi_k)$ ($(\cdot,\cdot)$ denoting the scalar product in $L^2(-X_0/2,X_0/2)$),
$B^* Bf = \sum_{k=0}^\infty \beta_k^2 f_k \psi_k$, $\beta_k^2$ being the eigenvalues of $B^*B$
(i.e., $B^*B\psi_k = \beta_k^2\psi_k$). Moreover, we require that $\lim_{k\to\infty}\beta_k^2=+\infty$, in
order to guarantee that the subset of the solution space, which is composed of those functions satisfying \eqref{3.2},
is compact. Now the functional $\Phi(f)$ has a unique minimum, given by
\beq
\of = \frac{A^* \og}{A^*A+\left(\varepsilon/E\right)^2 B^*B},
\label{6.2}
\eeq
which, by expanding $\og$ in terms of the functions $\psi_k$, can be written as follows,
\beq
\of = \sum_{k=0}^\infty \frac{\lambda_k\,\og_k}{\lambda_k^2+\left(\varepsilon/E\right)^2 \beta_k^2}\psi_k.
\label{7.2}
\eeq
Then, the following propositions can be proved. The proofs are given in very detailed form in \cite{Magnoli},
which refers to a different physical problem (the antenna synthesis). Nevertheless, the eigenfunctions
used there are the prolate spheroidal wave functions (as in the present problem), and the deterministic
regularization methods are given in variational form, which is appropriate for our case here.

\begin{proposition}
\label{pro:1}
For any function $f$ satisfying the bounds \eqref{2.2} and \eqref{3.2}, the following limit holds
\beq
\lim_{\varepsilon\to 0} \|f-\of\|_X=0 \qquad (E = \mathrm{fixed}).
\label{8.2}
\eeq
\end{proposition}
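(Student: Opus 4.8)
The plan is to pass to the orthonormal eigenbasis $\{\psi_k\}$ of $A^*A$, which by the commutativity hypothesis simultaneously diagonalizes $B^*B$, so that the assertion reduces to elementary estimates on scalar ``filter factors''. Put $h\doteq\og-Af$; then $\|h\|_Y\leqslant\varepsilon$ by \eqref{2.2}, and since $A=A^*$ with $A\psi_k=\lambda_k\psi_k$ one has $\og_k=\lambda_k f_k+h_k$, so that \eqref{7.2} separates into
\beq
\of=\sum_{k=0}^\infty\frac{\lambda_k^2}{\lambda_k^2+(\varepsilon/E)^2\beta_k^2}\,f_k\,\psi_k
+\sum_{k=0}^\infty\frac{\lambda_k}{\lambda_k^2+(\varepsilon/E)^2\beta_k^2}\,h_k\,\psi_k .
\eeq
(For each fixed $\varepsilon>0$ both series converge in $X$, since the filter factors are bounded while $\lambda_k\to0$ and $\beta_k\to+\infty$.) Consequently $f-\of=P-Q$, with $P\doteq\sum_k[(\varepsilon/E)^2\beta_k^2/(\lambda_k^2+(\varepsilon/E)^2\beta_k^2)]\,f_k\psi_k$ the regularization bias and $Q\doteq\sum_k[\lambda_k/(\lambda_k^2+(\varepsilon/E)^2\beta_k^2)]\,h_k\psi_k$ the noise--propagation term; by the triangle inequality it is enough to prove that $\|P\|_X\to0$ and $\|Q\|_X\to0$ as $\varepsilon\to0$.

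The bias $P$ is the easy part: each of its filter factors lies in $[0,1]$ and, since $\lambda_k$ is a fixed positive number, tends to $0$ as $\varepsilon\to0$; hence the $k$-th summand of $\|P\|_X^2$ is dominated by $|f_k|^2$ and tends to $0$ for every $k$, and because $\sum_k|f_k|^2=\|f\|_X^2<\infty$, dominated convergence (with the counting measure on $\N$) yields $\|P\|_X\to0$.

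The noise term $Q$ is where the real difficulty lies, and the step I expect to be the main obstacle: here $h=h^{(\varepsilon)}$ itself varies with $\varepsilon$ and the only control available is the global bound $\sum_k|h_k|^2\leqslant\varepsilon^2$, while the factor $\lambda_k/(\lambda_k^2+(\varepsilon/E)^2\beta_k^2)$ may be as large as $\sim E/(\varepsilon\beta_k)$, so that a naive estimate only gives boundedness of $\|Q\|_X$. The remedy is to split the series at an index $N=N(\varepsilon)$ chosen so that $N(\varepsilon)\to\infty$ while $\lambda_{N(\varepsilon)}$ stays much larger than $\varepsilon$ --- for instance $N(\varepsilon)\doteq\max\{k:\lambda_k\geqslant\sqrt\varepsilon\}$, which is finite because $\lambda_k\to0$ and tends to $\infty$ as $\varepsilon\to0$ because every $\lambda_k>0$. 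On the low modes $k\leqslant N(\varepsilon)$ one bounds the filter factor by $1/\lambda_k\leqslant1/\lambda_{N(\varepsilon)}$ (using $\lambda_0>\lambda_1>\cdots$), so that this part of $\|Q\|_X^2$ is at most $\varepsilon^2/\lambda_{N(\varepsilon)}^2\leqslant\varepsilon\to0$. On the high modes $k>N(\varepsilon)$ one uses $\lambda_k^2+(\varepsilon/E)^2\beta_k^2\geqslant2(\varepsilon/E)\lambda_k\beta_k$ to bound the filter factor by $E/(2\varepsilon\beta_k)$, whence this part of $\|Q\|_X^2$ is at most $E^2/(4\inf_{k>N(\varepsilon)}\beta_k^2)\to0$, since $\inf_{k>N(\varepsilon)}\beta_k\to+\infty$ as a consequence of $\beta_k\to+\infty$ and $N(\varepsilon)\to\infty$. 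Adding the two contributions gives $\|Q\|_X\to0$, which together with $\|P\|_X\to0$ proves \eqref{8.2}.
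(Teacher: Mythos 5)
Your argument is correct and complete: the decomposition of $f-\of$ into the regularization bias $P$ and the noise-propagation term $Q$ in the joint eigenbasis of $A^*A$ and $B^*B$, dominated convergence for $P$, and the split of $Q$ at the threshold $\lambda_k\geqslant\sqrt{\varepsilon}$ (with the arithmetic--geometric mean bound $\lambda_k^2+(\varepsilon/E)^2\beta_k^2\geqslant 2(\varepsilon/E)\lambda_k\beta_k$ on the tail, where $\beta_k\to+\infty$ takes over) all check out. The paper itself gives no proof, deferring to Proposition 12 of the cited reference [Magnoli]; your write-up is the standard spectral/filter-factor argument used there, so it is essentially the same approach, just made self-contained.
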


\begin{proof}
See Proposition 12 of  \cite{Magnoli}.
\end{proof}

In actual numerical computation it is often convenient to use truncated approximations. For instance,
the solution \eqref{7.2} leads to define the following approximation
\beq
\of^{\,(1)} \doteq \sum_{k=0}^{k_\beta} \frac{\og_k}{\lambda_k} \psi_k,
\label{9.2}
\eeq
where $k_\beta$ is the largest integer such that
\beq
\lambda_k \geqslant |\beta_k| \, \frac{\varepsilon}{E}.
\label{10.2}
\eeq

\begin{proposition}
\label{pro:2}
For any function $f$ satisfying bound \eqref{2.2}, the following limit holds
\beq
\lim_{\varepsilon\to 0} \|f-\of^{\,(1)}\| = 0 \qquad (E = \mathrm{fixed}).
\label{11.2}
\eeq
\end{proposition}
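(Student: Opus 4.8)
The plan is to work entirely in the orthonormal eigenbasis $\{\psi_k\}_{k=0}^\infty$ of $A$ and to split the reconstruction error $f-\of^{\,(1)}$ into the part carried by the retained modes $k\leqslant k_\beta$ and the part carried by the truncated tail $k>k_\beta$, estimating each separately: the noise bound \eqref{2.2} will control the first, while the \emph{a priori} bound \eqref{5.2} (assumed, exactly as in Proposition \ref{pro:1}, to hold also for $f$) will control the second. Writing $f=\sum_k f_k\psi_k$ and $\og=\sum_k\og_k\psi_k$, so that $(Af-\og)_k=\lambda_k f_k-\og_k$, Parseval's identity gives
\[
\|f-\of^{\,(1)}\|^2=\sum_{k=0}^{k_\beta}\left|f_k-\frac{\og_k}{\lambda_k}\right|^2+\sum_{k>k_\beta}|f_k|^2=:I_1+I_2 .
\]

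For $I_1$, since $f_k-\og_k/\lambda_k=-(Af-\og)_k/\lambda_k$ and the eigenvalues decrease, so that $\lambda_k\geqslant\lambda_{k_\beta}$ for every $k\leqslant k_\beta$, one gets
\[
I_1\leqslant\frac{1}{\lambda_{k_\beta}^2}\sum_{k=0}^{k_\beta}|(Af-\og)_k|^2\leqslant\frac{\varepsilon^2}{\lambda_{k_\beta}^2},
\]
and the defining inequality \eqref{10.2} for $k_\beta$, namely $\lambda_{k_\beta}\geqslant|\beta_{k_\beta}|\,\varepsilon/E$, then yields $I_1\leqslant E^2/\beta_{k_\beta}^2$. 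For $I_2$, bound \eqref{5.2} gives $\sum_k\beta_k^2|f_k|^2\leqslant E^2$, so setting $b_{k_\beta}:=\inf_{j>k_\beta}\beta_j^2$ (which is just $\beta_{k_\beta+1}^2$ when the $\beta_k$ are monotone) we obtain $I_2\leqslant E^2/b_{k_\beta}$. It then remains only to observe that $k_\beta\to\infty$ as $\varepsilon\to0$ with $E$ fixed: since $\lambda_k\to0$ and $\beta_k^2\to\infty$, the ratio $\lambda_k/|\beta_k|\to0$, so the set $\{k:\lambda_k\geqslant|\beta_k|\,\varepsilon/E\}$ is finite and nonempty for $\varepsilon$ small, whence $k_\beta$ is well defined, and for any fixed $N$ this set contains $N$ as soon as $\varepsilon\leqslant\lambda_N E/|\beta_N|$, so that $k_\beta\geqslant N$ for $\varepsilon$ small enough. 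Consequently $\beta_{k_\beta}^2\to\infty$ and $b_{k_\beta}\to\infty$, hence $I_1+I_2\to0$, which is precisely \eqref{11.2}.

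The two Parseval estimates are routine; the step that really needs care — and the point at which the hypotheses enter essentially — is the divergence $k_\beta\to\infty$ together with the control of the truncated tail $I_2$. The latter cannot be obtained from \eqref{2.2} alone, since without the \emph{a priori} bound a noise-admissible $f$ may carry an arbitrarily large high-frequency tail that $\of^{\,(1)}$ simply discards; the statement is therefore to be read, as in Proposition \ref{pro:1}, for $f$ satisfying both \eqref{2.2} and \eqref{3.2}, and the conditions $\lambda_k\to0$ and $\beta_k^2\to\infty$ are exactly what guarantee both that $k_\beta$ increases without bound and that the corresponding cut-off $\beta_{k_\beta}$ makes each of the two error contributions vanish in the limit.
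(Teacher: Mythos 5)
Your argument is correct. The paper itself gives no in-text proof (it simply cites Proposition 12 and its Corollary of [Magnoli]), and what you have written is precisely the standard spectral-cutoff argument that such a citation stands for: split $\|f-\of^{\,(1)}\|^2$ in the eigenbasis into the retained part, controlled by \eqref{2.2} together with the defining inequality \eqref{10.2} for $k_\beta$ (giving $I_1\leqslant E^2/\beta_{k_\beta}^2$), and the discarded tail; then show $k_\beta\to\infty$ as $\varepsilon\to 0$ because $\lambda_k/|\beta_k|\to 0$. All the individual estimates check out, including the monotonicity step $\lambda_k\geqslant\lambda_{k_\beta}$ for $k\leqslant k_\beta$ and the use of $\beta_k^2\to\infty$ to make both contributions vanish.

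One small correction to your closing discussion: the \emph{a priori} bound \eqref{3.2}/\eqref{5.2} is not actually needed for the statement as the paper phrases it. Proposition \ref{pro:2} asserts a limit for a \emph{fixed} $f$ in $X=L^2(-X_0/2,X_0/2)$, and for a fixed square-summable sequence $\{f_k\}$ the tail $I_2=\sum_{k>k_\beta}|f_k|^2$ tends to zero as soon as $k_\beta\to\infty$, with no constraint on $f$ beyond membership in $X$; your estimate of $I_1$ likewise uses only \eqref{2.2} and the definition of $k_\beta$. What \eqref{3.2} buys — and what your quantitative bound $I_2\leqslant E^2/b_{k_\beta}$ actually delivers — is \emph{uniformity} of the convergence (an explicit rate) over the whole class of objects satisfying both bounds, which is the stronger stability statement one ultimately wants from a regularization method, and is presumably what the cited Proposition 12 of [Magnoli] proves. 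So your proof is valid and in fact proves more than the literal statement; it is only the claim that the conclusion ``cannot be obtained from \eqref{2.2} alone'' that should be qualified as referring to uniform, rather than pointwise, convergence.
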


\begin{proof}
See Proposition 12 and its Corollary in \cite{Magnoli}.
\end{proof}

In several problems a weaker \emph{a priori} bound on the solution can be used by setting $B=\I$,
the identity operator. Therefore, instead of bound \eqref{3.2}, we have
\beq
\|Bf\|_{\cZ} \equiv \|f\|_{L^2(-X_0/2,X_0/2)} =
\left(\sum_{k=0}^\infty |f_k|^2\right)^{1/2} \leqslant E \qquad (E = \mathrm{constant}).
\label{12.2}
\eeq
In this case the unique minimum of functional \eqref{4.2} is given by
\beq
\of^{\,(2)} = \sum_{k=0}^{\infty} \frac{\lambda_k\og_k}{\lambda_k^2 + \left(\varepsilon/E\right)^2} \, \psi_k,
\label{13.2}
\eeq
and, accordingly, the following truncated approximation can be introduced
\beq
\of^{\,(3)} \doteq \sum_{k=0}^{k_\I} \frac{\og_k}{\lambda_k} \, \psi_k,
\label{14.2}
\eeq
where $k_\I$ is the largest integer such that
\beq
\lambda_k \geqslant \frac{\varepsilon}{E}.
\label{15.2}
\eeq
Both $\of^{\,(2)}$ and $\of^{\,(3)}$ converge to $f$ in the \emph{weak} sense. In fact, the following propositions can be proved.

\begin{proposition}
\label{pro:3}
For any function $f$ satisfying bounds \eqref{2.2} and \eqref{12.2}, the following limit holds
\beq
\lim_{\varepsilon\to 0} |([f-\of^{\,(2)}],v)| = 0
\qquad \left(\forall v \in L^2\left(-\frac{X_0}{2},\frac{X_0}{2}\right); E = \mathrm{fixed}\right).
\label{16.2}
\eeq
\end{proposition}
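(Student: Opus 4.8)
The plan is to diagonalize the problem in the eigenbasis $\{\psi_k\}$ of $A$ and then to exploit that weak convergence only requires control of finitely many low-index modes.

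First I would use \eqref{1.2} to write the data coefficients as $\og_k=(\og,\psi_k)=\lambda_k f_k+n_k$, where $n_k=(n,\psi_k)$ and, by \eqref{2.2}, $\sum_{k}|n_k|^2=\|n\|_Y^2\leqslant\varepsilon^2$. Substituting this into \eqref{13.2} and subtracting from $f=\sum_k f_k\psi_k$ gives the exact identity
\beq
f_k-\of^{\,(2)}_k=\frac{(\varepsilon/E)^2\,f_k-\lambda_k\,n_k}{\lambda_k^2+(\varepsilon/E)^2},
\eeq
so that, for any $v\in L^2(-X_0/2,X_0/2)$ with coefficients $v_k=(v,\psi_k)$,
\beq
\bigl([f-\of^{\,(2)}],v\bigr)=T_1(\varepsilon)-T_2(\varepsilon),
\eeq
where I abbreviate $T_1(\varepsilon)=(\varepsilon/E)^2\sum_{k}f_k\overline{v_k}/(\lambda_k^2+(\varepsilon/E)^2)$ and $T_2(\varepsilon)=\sum_k\lambda_k n_k\overline{v_k}/(\lambda_k^2+(\varepsilon/E)^2)$.

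For $T_1$ I would simply invoke dominated convergence for series: each summand is bounded in modulus by $|f_k||v_k|$, which is summable since $\|f\|\leqslant E$ (by \eqref{12.2}) and $v\in L^2$, whereas for each fixed $k$ the prefactor $(\varepsilon/E)^2/(\lambda_k^2+(\varepsilon/E)^2)$ tends to $0$ as $\varepsilon\to0$ because $\lambda_k>0$; hence $T_1(\varepsilon)\to0$.

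The main obstacle is $T_2$, since the noise $n$ itself varies with $\varepsilon$ and only $\|n\|\leqslant\varepsilon$ is available. A crude Cauchy--Schwarz estimate, using $\lambda_k/(\lambda_k^2+(\varepsilon/E)^2)\leqslant E/(2\varepsilon)$, gives merely $|T_2(\varepsilon)|\leqslant\tfrac12 E\|v\|$, which is bounded but not vanishing; this is exactly why only weak, and not norm, convergence can be expected here, in contrast with Propositions \ref{pro:1}--\ref{pro:2}, where the stronger a priori bound \eqref{3.2} tames the high modes. To extract the limit I would split $v$: given $\delta>0$, choose $N$ with $\bigl(\sum_{k>N}|v_k|^2\bigr)^{1/2}<\delta$; then the tail $\sum_{k>N}$ in $T_2(\varepsilon)$ is at most $\tfrac12 E\delta$ by the same crude estimate, while for the finitely many head terms $0\leqslant k\leqslant N$ I would use $\lambda_k/(\lambda_k^2+(\varepsilon/E)^2)\leqslant1/\lambda_k\leqslant1/\lambda_N$ together with Cauchy--Schwarz to bound their sum by $\varepsilon\|v\|/\lambda_N$, which tends to $0$ as $\varepsilon\to0$ for the now-fixed $N$. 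Therefore $\limsup_{\varepsilon\to0}|T_2(\varepsilon)|\leqslant\tfrac12 E\delta$, and letting $\delta\to0$ yields $T_2(\varepsilon)\to0$. Together with the bound on $T_1$ this proves \eqref{16.2}. (The fully detailed version is Proposition 12 and its Corollary in \cite{Magnoli}.)
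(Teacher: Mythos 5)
Your proof is correct and complete. The paper itself gives no argument here (it merely cites Proposition 13 of \cite{Magnoli}), but your route --- diagonalizing in the eigenbasis, writing $f_k-\of^{\,(2)}_k$ as a regularization--bias term plus a noise--propagation term, killing the bias by dominated convergence and the noise term by a head/tail split of $v$ combined with the bounds $\lambda_k/(\lambda_k^2+(\varepsilon/E)^2)\leqslant E/(2\varepsilon)$ and $\leqslant 1/\lambda_N$ --- is exactly the standard spectral argument behind such results, and every step checks out (the key identity, the Cauchy--Schwarz estimates, and the monotonicity $\lambda_k\geqslant\lambda_N$ for $k\leqslant N$ are all used correctly). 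The only refinement worth noting: as stated the proposition is pointwise in $f$, which is what you prove; if one wanted the convergence uniform over the ball $\|f\|\leqslant E$ (the form in which such results are often stated in \cite{Magnoli}), the $T_1$ term would also need the same head/tail splitting in place of plain dominated convergence, since the dominating sequence $|f_k||v_k|$ depends on $f$ --- but that is a one-line modification, not a gap.
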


\begin{proof}
See Proposition 13 and its Corollary in \cite{Magnoli}.
\end{proof}

\begin{proposition}
\label{pro:3bis}
For any function $f$ satisfying bound \eqref{2.2} and \eqref{12.2}, the following limit holds
\beq
\lim_{\varepsilon\to 0} |([f-\of^{\,(3)}],v)| = 0
\qquad \left(\forall v \in L^2\left(-\frac{X_0}{2},\frac{X_0}{2}\right); E = \mathrm{fixed}\right).
\label{17.2}
\eeq
\end{proposition}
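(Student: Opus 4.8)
The plan is to carry out the whole argument in the eigenbasis $\{\psi_k\}_{k=0}^\infty$ of $A$, in close parallel with the proof of Proposition~\ref{pro:3} but exploiting the sharper cutoff that defines $k_\I$. Expand $f=\sum_k f_k\psi_k$ and write $f_k=(f,\psi_k)$, $n_k=(n,\psi_k)$, $v_k=(v,\psi_k)$. By \eqref{12.2} one has $\sum_k|f_k|^2\leqslant E^2$, while \eqref{1.2} and \eqref{2.2} give $\og=Af+n$ with $\sum_k|n_k|^2=\|n\|^2\leqslant\varepsilon^2$; since $\og_k=\lambda_k f_k+n_k$, formula \eqref{14.2} becomes $\of^{\,(3)}=\sum_{k=0}^{k_\I}\bigl(f_k+n_k/\lambda_k\bigr)\psi_k$, so that
\beq
\bigl([f-\of^{\,(3)}],v\bigr)=\sum_{k>k_\I}f_k\,\overline{v_k}\;-\;\sum_{k=0}^{k_\I}\frac{n_k}{\lambda_k}\,\overline{v_k}.
\eeq
It then suffices to show that both sums on the right tend to $0$ as $\varepsilon\to0$ with $E$ fixed, and here the key elementary fact is that $\lambda_k\downarrow0$ forces $k_\I\to\infty$: for every fixed $N$ one has $k_\I\geqslant N$ as soon as $\varepsilon<E\lambda_N$.

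The first sum is the tail of the convergent series $\sum_k f_k\overline{v_k}$: once $k_\I\geqslant N$, Cauchy--Schwarz gives $\bigl|\sum_{k>k_\I}f_k\overline{v_k}\bigr|\leqslant E\bigl(\sum_{k>k_\I}|v_k|^2\bigr)^{1/2}\leqslant E\bigl(\sum_{k>N}|v_k|^2\bigr)^{1/2}$, which is small for large $N$ because $v\in L^2$. For the second (noise) sum I would split at the same fixed index $N$: the head $\sum_{k=0}^{N}n_k\overline{v_k}/\lambda_k$ is bounded by $(\varepsilon/\lambda_N)\,\|v\|$, which tends to $0$ as $\varepsilon\to0$ for $N$ fixed, while for the tail, using $\lambda_k\geqslant\lambda_{k_\I}$ for $k\leqslant k_\I$ and Cauchy--Schwarz,
\beq
\Bigl|\sum_{k=N+1}^{k_\I}\frac{n_k\,\overline{v_k}}{\lambda_k}\Bigr|
\leqslant\frac{\|n\|}{\lambda_{k_\I}}\Bigl(\sum_{k>N}|v_k|^2\Bigr)^{1/2}
\leqslant E\Bigl(\sum_{k>N}|v_k|^2\Bigr)^{1/2},
\eeq
where the last step uses $\|n\|\leqslant\varepsilon$ together with the very definition \eqref{15.2} of $k_\I$, namely $\varepsilon/\lambda_{k_\I}\leqslant E$. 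Hence, given $\delta>0$, first pick $N$ with $E\bigl(\sum_{k>N}|v_k|^2\bigr)^{1/2}<\delta/3$, then pick $\varepsilon$ small enough that $k_\I\geqslant N$ and $(\varepsilon/\lambda_N)\|v\|<\delta/3$; adding the three bounds yields $|([f-\of^{\,(3)}],v)|<\delta$.

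The hard part will be exactly the control of the noise--amplification sum $\sum n_k\overline{v_k}/\lambda_k$, whose coefficients $1/\lambda_k$ are unbounded; what rescues the estimate is precisely the truncation \eqref{15.2}, which keeps $\varepsilon/\lambda_{k_\I}\leqslant E$ uniformly in $\varepsilon$, so that the dangerous block is dominated by a fixed multiple of a tail of $\sum_k|v_k|^2$ rather than diverging. This also explains why only \emph{weak}, and not \emph{norm}, convergence is asserted: since $\|f-\of^{\,(3)}\|^2=\sum_{k>k_\I}|f_k|^2+\sum_{k=0}^{k_\I}|n_k|^2/\lambda_k^2$ and the second term is merely bounded (by $E^2$) without necessarily vanishing, strong convergence cannot be expected under \eqref{2.2} and \eqref{12.2} alone. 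Apart from this point the argument is routine and structurally identical to that of Propositions~\ref{pro:2} and~\ref{pro:3}; alternatively one may simply invoke the Corollary of Proposition~13 in \cite{Magnoli} with $B=\I$.
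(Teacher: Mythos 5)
Your argument is correct and complete: the decomposition $f-\of^{\,(3)}=\sum_{k>k_\I}f_k\psi_k-\sum_{k\leqslant k_\I}(n_k/\lambda_k)\psi_k$, the observation that $k_\I\to\infty$ as $\varepsilon\to 0$, and the use of the truncation rule \eqref{15.2} to keep $\varepsilon/\lambda_{k_\I}\leqslant E$ are exactly the ingredients of the standard eigenfunction--expansion proof that the paper delegates to Proposition 14 and its Corollary in \cite{Magnoli}. Your closing remark correctly identifies why only weak convergence can be asserted, namely that the noise block $\sum_{k\leqslant k_\I}|n_k|^2/\lambda_k^2$ is only bounded by $E^2$ rather than vanishing.
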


\begin{proof}
See Proposition 14 and its Corollary in \cite{Magnoli}.
\end{proof}

\begin{remark}
These regularization methods are not free from faults. We restrict ourselves to mention just two of them. For reason of simplicity
we shall focus on the approximation $\of^{\,(3)}$, but the same considerations hold also for $\of^{\,(1)}$. \\
(i) Approximation \eqref{14.2} is based on the truncation criterion \eqref{15.2}. Put, for simplicity and without loss of generality,
$E=1$. Then formula \eqref{15.2} reads: $\lambda_k \geqslant \varepsilon$. This means that the values of $\lambda_k$
(i.e., the eigenvalues of the operator $A$ representing the optical instrument) should be compared with the bound on the noise $\varepsilon$.
But this approach appears quite unnatural from the viewpoint of the experimental or physical sciences,
whose methodology rather suggests to compare the signal with the noise. In other words, the expansions should rather be truncated
at the value $k_\mathrm{p}$ of $k$ such that for $k>k_\mathrm{p}$ the Fourier coefficients $g_k=(g,\psi_k)$ of the noiseless data are smaller
or, at most, of the same order of magnitude of $\varepsilon$. In this case, in fact, it would be impossible to extract
information from the corresponding noisy coefficients $\og_k=(\og,\psi_k)$. \\
(ii) This second remark is strictly connected to the first one. It is easy to exhibit examples
of objects $f$ whose corresponding images $g$ have Fourier components small for low values of $k$, while the significant
contributions are carried by those Fourier components which are suppressed by condition \eqref{15.2} (i.e.,
$\lambda_k<\varepsilon$, $E=1$). This remark holds also for more refined solutions of the form \eqref{6.2},
which correspond to the minimization of functional \eqref{4.2}. Indeed, the minimization of this functional
works as a low--pass filter, whose action is smoothing the Fourier components $\og_k$ for high values of $k$.
This latter statement follows by noting that Proposition \ref{pro:1} holds if and only if
$\lim_{k\to\infty}\beta_k^2 = +\infty$. In conclusion, it is possible to give examples
where the standard deterministic regularization methods fail in spite of their rigorous mathematical
correctness, since these procedures do not guarantee that the bulk of the signal (of the object,
in our case) has been really recovered (see \cite{DeMicheli1}).
\label{rem:1}
\end{remark}

\subsection{Probabilistic regularization methods}
\label{subse:probabilistic}

We now want to reconsider Eq. \eqref{1.2} from a probabilistic point of view. With this in mind,
we re--write \eqref{1.2} in the following form
\beq
A\xi + \zeta = \eta,
\label{18.2}
\eeq
where $\xi$, $\zeta$, and $\eta$, which correspond to $f$, $n$, and $\og$ respectively, are Gaussian
weak random variables (w.r.v.) in the Hilbert space $L^2(-X_0/2,X_0/2)$ \cite{Balakrishnan}.
A Gaussian w.r.v. is uniquely defined by its mean element and its covariant operator. In the present case we denote
by $R_{\xi\xi}$, $R_{\zeta\zeta}$, and $R_{\eta\eta}$ the covariance operators of $\xi$, $\zeta$, and $\eta$ respectively.
Next, we make the following assumptions:

\begin{itemize}
\item[(I)] $\xi$ and $\zeta$ have zero mean, i.e., $m_\xi = m_\zeta = 0$;
\item[(II)] $\xi$ and $\zeta$ are uncorrelated, i.e., $R_{\xi\zeta}=0$;
\item[(III)] $R^{-1}_{\zeta\zeta}$ exists.
\end{itemize}

The third assumption is the mathematical formulation of the fact that all the components of the data
function are affected by noise. As it has been proved by Franklin (see formula (3.11) of \cite{Franklin}),
if the signal and the noise satisfy assumptions (I) and (II), then
\beq
R_{\eta\eta} = A R_{\xi\xi} A^* + R_{\zeta\zeta},
\label{19.2}
\eeq
and the cross--covariance operator is given by
\beq
R_{\xi\eta} = R_{\xi\xi} A^*.
\label{20.2}
\eeq
We also assume that $R_{\zeta\zeta}$ depends on a parameter $\varepsilon$ which tends to zero when
the noise vanishes, i.e., we write
\beq
R_{\zeta\zeta} = \varepsilon^2 \, \HH,
\label{21.2}
\eeq
where $\HH$ is a given operator (e.g., $\HH = \I = \mathrm{the ~ identity ~operator}$, in the case of white noise).
We can now state the following problem.

\begin{problem}
Given a value $\og$ of the w.r.v. $\eta$, find an estimate of the w.r.v. $\xi$.
\end{problem}

We first turn Eq. \eqref{18.2} into an infinite sequence of unidimensional equations by means of the orthogonal
projections,
\beq
\lambda_k \xi_k + \zeta_k = \eta_k \qquad (k=0,1,2,\ldots),
\label{22.2}
\eeq
where $\xi_k = (\xi,\psi_k)$, $\zeta_k = (\zeta,\psi_k)$, $\eta_k = (\eta,\psi_k)$ are Gaussian random variables.
Equations \eqref{22.2} can be obtained by formal expansions of the w.r.v. $\xi$, $\zeta$, and $\eta$ on the orthonormal
basis $\{\psi_k\}_{k=0}^\infty$ (which are the eigenfunctions of the operator $A$), i.e.,
$\xi=\sum_{k=0}^\infty\xi_k\psi_k$, $\zeta=\sum_{k=0}^\infty\zeta_k\psi_k$, and $\eta=\sum_{k=0}^\infty\eta_k\psi_k$.
Then, from \eqref{18.2} we obtain an infinite sequence of equalities of the following form: $(\lambda_k\xi_k+\zeta_k-\eta_k)\psi_k=0$
$(k=0,1,2,\ldots)$ from which Eqs. \eqref{22.2} follow. Let us remark that the expansions of $\xi$, $\zeta$, and $\eta$ are not
orthogonal expansions, since their coefficients are statistically interconnected or, in other words,
$\rmE\{\xi_m,\xi_n\}$ (and similarly $\rmE\{\zeta_m,\zeta_n\}$ and $\rmE\{\eta_m,\eta_n\}$; $\rmE\{\cdot\}$ denoting the expectation value)
does not in general vanish. It amounts to say that the coefficients of these expansions are not statistically
independent. Let us indeed remind that, in general, it is not possible to expand the process $\xi$ (or $\zeta$, or $\eta$)
in an orthogonal series on a finite interval, except in the limiting situation of a stationary white noise process.
This remark is relevant below in connection with the evaluation in the information theory approach.

Next, we can introduce the variances: $\rho^2_k=(R_{\xi\xi}\psi_k,\psi_k)$, $\varepsilon^2\nu_k^2=(R_{\zeta\zeta}\psi_k,\psi_k)$,
$\lambda_k^2\rho_k^2+\varepsilon^2\nu_k^2=(R_{\eta\eta}\psi_k,\psi_k)$. In view of assumptions (I) and (II), the probability
densities for $\xi_k$ and $\zeta_k$ can be written as follows
\beq
p_{\xi_k}(x)=\frac{1}{\sqrt{2\pi}\rho_k}\exp\left(-\frac{x^2}{2\rho_k^2}\right) \qquad (k=0,1,2,\ldots),
\label{23.2}
\eeq
and
\beq
p_{\zeta_k}(x)=\frac{1}{\sqrt{2\pi}\varepsilon\nu_k}\exp\left(-\frac{x^2}{2\varepsilon^2\nu_k^2}\right) \qquad (k=0,1,2,\ldots).
\label{24.2}
\eeq
By the use of Eqs. \eqref{22.2} we can also introduce the conditional probability density
$p_{\eta_k}(y | x)$ of the random variable $\eta_k$ for fixed $\xi_k=x$, which reads
\beq
p_{\eta_k}(y|x)=\frac{1}{\sqrt{2\pi}\,\varepsilon\nu_k}\exp\left[-\frac{(y-\lambda_k x)^2}{2\varepsilon^2\nu_k^2}\right]
=\frac{1}{\sqrt{2\pi}\,\varepsilon\nu_k}\exp\left[-\frac{\lambda_k^2}{2\varepsilon^2\nu_k^2}\left(x-\frac{y}{\lambda_k}\right)^2\right].
\label{25.2}
\eeq
Now, let us apply the Bayes formula that provides the conditional probability density of $\xi_k$ given $\eta_k$
through the following expression \cite{Middleton}
\beq
p_{\xi_k}(x|y)=\frac{p_{\xi_k}(x) \, p_{\eta_k}(y|x)}{p_{\eta_k}(y)},
\label{26.2}
\eeq
provided $p_{\eta_k}(y) \neq 0$. \\
Thus, if a realization of the random variable $\eta_k$ is given by $\og_k$, formula \eqref{26.2} becomes
\beq
p_{\xi_k}(x|\og)=A_k\exp\left(-\frac{x^2}{2\rho_k^2}\right)
\exp\left[-\frac{\lambda_k^2}{2\varepsilon^2\nu_k^2}\left(x-\frac{\og_k}{\lambda_k}\right)^2\right].
\label{27.2}
\eeq
Next, we introduce the following sets:
\begin{align}
\cI &= \{k \in\N \,:\,\lambda_k\rho_k \geqslant\varepsilon\nu_k\}, \label{IN.a} \\
\cN &= \{k \in\N \,:\,\lambda_k\rho_k <\varepsilon\nu_k\}. \label{IN.b}
\end{align}
We can now see that the conditional probability density \eqref{27.2} can be regarded as the product
of two Gaussian probability densities:
$p_1(x)=A_k^{(1)}\exp(-\frac{x^2}{2\rho_k^2})$ and
$p_2(x)=A_k^{(2)}\exp[-\frac{\lambda_k^2}{2\varepsilon^2\nu_k^2}(x-\frac{\og_k}{\lambda_k})^2]$ with
$A_k = A_k^{(1)}A_k^{(2)}$, whose variances are respectively given by $\rho_k^2$ and $(\varepsilon\nu_k/\lambda_k)^2$.
Now, if $k\in\cI$ the variance associated with $p_2(x)$ is smaller than the corresponding variance of $p_1(x)$,
and vice versa if $k\in\cN$. Therefore, it appears reasonable to consider as an acceptable approximation of
$\langle\xi_k\rangle$, i.e. the mean value of the random variable $\xi_k$, the mean
value associated with the density $p_2(x)$ if $k\in\cI$, and the mean value associated with the density
$p_1(x)$ if $k\in\cN$. We can then write the following approximation
\beq
\langle\xi_k\rangle =
\begin{cases}
\frac{\sds\og_k}{\sds\lambda_k} & \quad\text{if $k \in \cI$}, \label{29.a.2}  \\
~0 & \quad\text{if $k \in \cN$}.
\end{cases}
\eeq
Consequently, given the value $\og$ of the w.r.v. $\eta$, we are led to consider the following linear estimator
of $\xi$
\beq
T\eta = \sum_{k\in\cI} \frac{\og_k}{\lambda_k}\,\psi_k.
\label{30.2}
\eeq
In order to pass from heuristic considerations to rigorous statements, we must prove that the linear
estimator \eqref{30.2} leads to a probabilistically regularized solution. For this purpose, we must evaluate
the global mean--squared error associated with the linear estimator \eqref{30.2}, i.e.
$\rmE\{\|\xi-T\eta\|^2\}$, along with
$\rmE\{\|\xi\|^2\}=\sum_{k=0}^\infty(R_{\xi\xi}\psi_k,\psi_k)=\mathrm{Trace}\,(R_{\xi\xi})$. \\
The following propositions can be proved.

\begin{proposition}
\label{pro:4}
(i) If $\lim_{k\to\infty}(\lambda_k\rho_k/\nu_k)=0$, then the set $\cI$ is finite for any $\varepsilon>0$. \\
(ii) Assuming that the limit stated in (i) holds and, in addition, that $R_{\xi\xi}$ is an operator of trace class, then the
following relationship holds
\beq
\rmE\left\{\|\xi-T\eta\|^2\right\}=
\sum_{k\in\cN}\rho_k^2 + \sum_{k\in\cI} \frac{\varepsilon^2\nu_k^2}{\lambda_k^2} < \infty.
\label{31.2}
\eeq
\end{proposition}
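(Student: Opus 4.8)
The plan is to establish part (i) as a quick consequence of the definition of $\cI$, and then part (ii) by a direct computation of $\rmE\{\|\xi-T\eta\|^2\}$ using the orthonormal expansion on $\{\psi_k\}$ and exploiting the statistical relations \eqref{19.2}--\eqref{21.2}. For (i), note that $k \in \cI$ means $\lambda_k\rho_k \geqslant \varepsilon\nu_k$, i.e.\ $\lambda_k\rho_k/\nu_k \geqslant \varepsilon$. If $\lim_{k\to\infty}(\lambda_k\rho_k/\nu_k)=0$, then for any fixed $\varepsilon>0$ only finitely many indices $k$ can satisfy this inequality, so $\cI$ is finite. (One should also note $\nu_k>0$ for all $k$, which follows from assumption (III), so the ratio is well defined.)

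For (ii), I would start from $\xi - T\eta = \sum_{k=0}^\infty \xi_k\psi_k - \sum_{k\in\cI}\frac{\og_k}{\lambda_k}\psi_k$, writing $\eta_k = \lambda_k\xi_k + \zeta_k$ from \eqref{22.2}. Splitting the sum over $\cI$ and $\cN$: for $k\in\cN$ the contribution to $\xi-T\eta$ is simply $\xi_k\psi_k$, while for $k\in\cI$ it is $\big(\xi_k - \frac{\lambda_k\xi_k+\zeta_k}{\lambda_k}\big)\psi_k = -\frac{\zeta_k}{\lambda_k}\psi_k$. Since $\{\psi_k\}$ is orthonormal,
\[
\|\xi-T\eta\|^2 = \sum_{k\in\cN}|\xi_k|^2 + \sum_{k\in\cI}\frac{|\zeta_k|^2}{\lambda_k^2}.
\]
Taking expectations termwise and using $\rmE\{|\xi_k|^2\} = (R_{\xi\xi}\psi_k,\psi_k) = \rho_k^2$ and $\rmE\{|\zeta_k|^2\} = (R_{\zeta\zeta}\psi_k,\psi_k) = \varepsilon^2\nu_k^2$ gives exactly the stated formula \eqref{31.2}. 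Finiteness then follows because $\cI$ is finite by part (i) (so the second sum has finitely many terms, each finite), and $\sum_{k\in\cN}\rho_k^2 \leqslant \sum_{k=0}^\infty \rho_k^2 = \mathrm{Trace}\,(R_{\xi\xi}) < \infty$ by the trace-class hypothesis.

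The main obstacle is justifying the interchange of expectation and the infinite summation, and more fundamentally making rigorous sense of the termwise identity $\|\xi-T\eta\|^2 = \sum_k|\cdots|^2$ in the weak-random-variable setting, where $\xi$ and $\zeta$ need not have realizations in $L^2$ and the coefficients $\xi_k,\zeta_k$ are correlated across $k$ (as the paper emphasized just before Problem~1). The clean way around this is to work entirely with second moments: define the error functional directly through the covariance operators, $\rmE\{\|\xi-T\eta\|^2\} := \sum_{k=0}^\infty \rmE\{|(\xi-T\eta,\psi_k)|^2\}$, compute each term as above — here the cross term vanishes because on $\cI$ the $k$-th component of $\xi-T\eta$ involves only $\zeta_k$ and on $\cN$ only $\xi_k$, so no cross-correlation $R_{\xi\zeta}$ enters even though it would be zero anyway by assumption (II) — and observe that the resulting series of nonnegative terms either converges to the stated value or diverges; the trace-class assumption plus finiteness of $\cI$ forces convergence. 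I would present the computation at the level of components, invoking Tonelli's theorem (all terms nonnegative) to legitimize summing the expectations, and cite \cite{Balakrishnan} for the underlying weak-random-variable formalism that makes $\sum_k\rmE\{|(\xi,\psi_k)|^2\}=\mathrm{Trace}\,(R_{\xi\xi})$ meaningful.
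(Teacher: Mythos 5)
Your proof is correct: part (i) is the immediate observation that $k\in\cI$ forces $\lambda_k\rho_k/\nu_k\geqslant\varepsilon$, and part (ii) follows from the component-wise identity $(\xi-T\eta,\psi_k)=-\zeta_k/\lambda_k$ on $\cI$ and $=\xi_k$ on $\cN$, with finiteness from the trace-class hypothesis and the finiteness of $\cI$. The paper itself only defers to Proposition 3.3 of \cite{DeMicheli1}, where essentially this same direct computation (including the definition of $\rmE\{\|\cdot\|^2\}$ through second moments in the weak-random-variable setting) is carried out, so your argument matches the intended one.
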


\begin{proof}
See Proposition 3.3 of  \cite{DeMicheli1}.
\end{proof}

\begin{proposition}
\label{pro:5}
If the covariance operator $R_{\xi\xi}$ is of trace class, and if the set $\cI$ is finite (see Proposition \ref{pro:4}),
then the following limit holds
\beq
\lim_{\varepsilon\to 0} ~ \rmE\left\{\|\xi-T\eta\|^2\right\}=0,
\label{32.2}
\eeq
i.e., the linear estimator of $\xi$ given by formula \eqref{30.2} gives a probabilistically regularized solution to Problem \ref{pro:1}.
\end{proposition}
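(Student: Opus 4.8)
The plan is to start from the closed form of the global mean--squared error established in Proposition~\ref{pro:4}(ii): under the present hypotheses (trace--class $R_{\xi\xi}$ and finite $\cI$) equation~\eqref{31.2} holds, so it suffices to show that each of the two sums $\sum_{k\in\cN}\rho_k^2$ and $\sum_{k\in\cI}\varepsilon^2\nu_k^2/\lambda_k^2$ vanishes as $\varepsilon\to0$. The one point that needs care --- and what I would flag as the main obstacle --- is that both index sets are $\varepsilon$--dependent: writing $\cI_\varepsilon=\{k:\lambda_k\rho_k\geqslant\varepsilon\nu_k\}$ and $\cN_\varepsilon$ for its complement, the set $\cI_\varepsilon$ \emph{grows} as $\varepsilon\downarrow0$, so the second sum keeps acquiring new terms and one cannot simply say ``a finite sum of terms each tending to zero''. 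The two facts that resolve this are: (i) the trace--class assumption, which gives $\sum_{k=0}^\infty\rho_k^2=\Tr(R_{\xi\xi})<\infty$, so that $\sum_{k>K}\rho_k^2$ can be made arbitrarily small by choosing $K$ large; and (ii) the elementary bound $\varepsilon^2\nu_k^2/\lambda_k^2\leqslant\rho_k^2$ valid on $\cI_\varepsilon$, obtained by squaring the defining inequality $\varepsilon\nu_k\leqslant\lambda_k\rho_k$ (here $\nu_k^2=(\HH\psi_k,\psi_k)$ is finite and independent of $\varepsilon$, and $\lambda_k>0$).

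For the first sum, fix $\delta>0$ and pick $K$ with $\sum_{k>K}\rho_k^2<\delta$. For each $k\leqslant K$ with $\rho_k>0$ one has $\lambda_k\rho_k>0$, hence $k\in\cI_\varepsilon$ whenever $\varepsilon<\lambda_k\rho_k/\nu_k$; taking $\varepsilon$ below the minimum of these finitely many positive thresholds forces $\{0,1,\dots,K\}\cap\cN_\varepsilon$ to consist only of indices $k$ with $\rho_k=0$, which contribute nothing. Therefore, for all $\varepsilon$ small enough, $\sum_{k\in\cN_\varepsilon}\rho_k^2\leqslant\sum_{k>K}\rho_k^2<\delta$; letting $\delta\to0$ gives $\sum_{k\in\cN}\rho_k^2\to0$. (Equivalently this is monotone convergence applied to $\rho_k^2$ restricted to the shrinking sets $\cN_\varepsilon$, whose intersection over all $\varepsilon>0$ is contained in $\{k:\rho_k=0\}$.)

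For the second sum I would split $\sum_{k\in\cI_\varepsilon}\varepsilon^2\nu_k^2/\lambda_k^2$ at the same $K$. The part with $k\leqslant K$ is at most $\varepsilon^2\sum_{k=0}^K\nu_k^2/\lambda_k^2$, a fixed finite constant times $\varepsilon^2$, hence tends to $0$; the part with $k>K$ is, term by term via the bound in (ii), at most $\sum_{k>K}\rho_k^2<\delta$. Hence $\limsup_{\varepsilon\to0}\sum_{k\in\cI_\varepsilon}\varepsilon^2\nu_k^2/\lambda_k^2\leqslant\delta$ for every $\delta>0$, so this sum also tends to $0$. Adding the two limits and using \eqref{31.2} gives $\lim_{\varepsilon\to0}\rmE\{\|\xi-T\eta\|^2\}=0$, which is precisely the claimed probabilistic regularization; the hypothesis that $\cI$ be finite is used only to guarantee the validity of the error formula \eqref{31.2} of Proposition~\ref{pro:4}.
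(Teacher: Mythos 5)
Your argument is correct and complete: the key difficulty you identify (the $\varepsilon$--dependence of the index sets $\cI$ and $\cN$ in \eqref{31.2}) is real, and your two devices --- the tail estimate from $\mathrm{Trace}\,(R_{\xi\xi})<\infty$ and the pointwise bound $\varepsilon^2\nu_k^2/\lambda_k^2\leqslant\rho_k^2$ on $\cI$ obtained from its defining inequality --- resolve it properly. The paper itself gives no proof here, deferring entirely to Proposition 3.5 of the cited reference, and your argument is the natural one that that reference carries out, so nothing further is needed.
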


\begin{proof}
See Proposition 3.5 of \cite{DeMicheli1}.
\end{proof}

\begin{remark}
As we have already remarked above, the deterministic regularization methods do not guarantee that the
bulk of the signal (the object, in our case) has been really recovered. Conversely, the probabilistic
regularization methods (i.e., the solution given by formula \eqref{30.2}) can really reconstruct,
within a certain degree of approximation, the bulk of the object, once the sets $\cI$ and $\cN$ have
been neatly separated. In fact, as we shall see in Section \ref{se:comparison}, the Gaussian
random variables $\eta_k$, associated with the set $\cI$, contain a significant amount of information
on the corresponding variables $\xi_k$, whereas in the random variables $\eta_k$,
associated with the set $\cN$, the noise is prevailing. At this point the problem is how
to split the set of the Gaussian random variables $\{\eta_k\}$ into the two sets $\cI$ and $\cN$.
This task can be achieved by computing the correlation function of the random variables $\eta_k$, which are
the probabilistic counterpart of the coefficients $\og_k$. Let us indeed recall that the coefficients
$\eta_k=(\eta,\psi_k)$, obtained by the formal expansion $\eta=\sum_{k=0}^\infty\eta_k\psi_k$ are not
statistically independent. These statistical methods require great caution and involve delicate
mathematical questions, which have been studied in \cite{DeMicheli1,DeMicheli2}, and we do not
return on these problems here. In particular, in \cite{DeMicheli1} some explicit examples have been
shown, where the deterministic regularization method fails, whereas the statistical one can actually reconstruct
the solution of the integral equation considered. Analogous statistical methods have been also used
in optics \cite{DeMicheli2}, in connection with the object restoration in the case of spatially incoherent illumination.
In this case, in particular, a positivity constraint has been incorporated into the probabilistically
regularized solution by means of a quadratic programming technique. Several examples were shown,
and satisfactory results had been obtained. \\
As a final remark, we point out that the deterministic and the statistical regularization methods can be used jointly in the
following sense. Assuming that we know \emph{a priori} global bounds on the solution such that deterministic
regularized solutions can be tried, then their reliability can be tested by using statistical correlation
methods along the lines suggested by the probabilistic regularization procedures.
\label{rem:2}
\end{remark}

\section{Topological information theory: the $\boldsymbol{\varepsilon}$--entropy of the image}
\label{se:capacity}

Let us return to the deterministic regularization methods, and to the related
\emph{a priori} truncation criteria. Consider bound \eqref{15.2} where, for simplicity and without loss of generality,
we put $E=1$. Accordingly, we consider the approximation $\of^{\,(3)}=\sum_{k=0}^{k_\I}(\og_k/\lambda_k)\psi_k$,
where $k_\I$ is the largest integer such that $\lambda_k\geqslant\varepsilon$. As proved in
Proposition \ref{pro:3}, $\of^{\,(3)}$ converges \emph{weakly} to $f$ and, consequently, only a \emph{weak} continuity
can be guaranteed in the restored solution. \\
We now make two additional assumptions:

\begin{enumerate}
\item We assume that the noise $n$ is moderate enough, namely,
it is such that the noisy image belongs to the range of $A$: $\og\in\rmR(A)$.
\item We assume that $k_\I \simeq k_\mathrm{p}$, i.e. the truncation number $k_\I$ associated with the approximation
$\of^{\,(3)}$ is very close to $k_\mathrm{p}$, which is the value of $k$ such that for $k>k_\mathrm{p}$ the Fourier components
$g_k=(g,\psi_k)$ of the noiseless data are smaller or, at most, of the same order of magnitude of $\varepsilon$
(see Remark \ref{rem:1}). It is obvious that in making this assumption we suppose that
the modulus of the coefficients $g_k$ decreases for increasing values of $k$. If these assumptions are true,
then we can exclude those ``\emph{pathological}'' examples in which the bulk of the object is not recovered by the
approximation $\of^{\,(3)}$.
\end{enumerate}

In view of the \emph{a priori} bound \eqref{12.2} with $E=1$, we are led to consider the unit ball in the solution space
$X \equiv L^2(-X_0/2,X_0/2)$: i.e., the set $\{f\in X\,:\,\|f\|_X\leqslant 1\}$; the operator $A$ maps the unit ball onto a compact ellipsoid
$\cE \in \mathrm{Range}\,(A)$, contained in the data space $Y \equiv L^2(-X_0/2,X_0/2)$, whose semi--axes lengths
are the eigenvalues $\lambda_k$ of the operator $A$.

Let us now recall some basic definitions from the information theory \cite{Kolmogorov}:
\begin{itemize}
\item[(a)] In the theory of information, the unit of a \emph{collection of information} is the
amount of information in one binary sign (that is, designating whether it is 0 or 1).
\item[(b)] The \emph{entropy} of a collection of possible \emph{communications}, undergoing
transmission with a specified accuracy, is defined as the number of binary signs necessary
to transmit an arbitrary one of these communications with a given accuracy.
\item[(c)] The \emph{capacity} of a transmitting apparatus is defined as the number of binary signs that it
can transmit reliably.
\end{itemize}
Coming back to the compact ellipsoid $\cE$, we recall some basic definitions which give a numerical
estimate of its \emph{massiveness} \cite{Kolmogorov,Lorentz}:
\begin{itemize}
\item[(a')] A family $Y_0,\ldots,Y_n$ of subsets of $Y$ is an $\varepsilon$--covering of $\cE$ if the
diameter of each $Y_k$ does not exceed $2\varepsilon$ and if the sets $Y_k$ cover $\cE$: i.e.,
$\cE \subset\cup_{k=0}^n Y_k$.
\item[(b')] Points $y_0,\ldots,y_m$ of $\cE$ are called $\varepsilon$--distinguishable if the distance
between each two of them exceeds $\varepsilon$.
\end{itemize}
Since $\cE$ is compact, then a finite $\varepsilon$--covering exists for each $\varepsilon>0$, and, moreover,
$\cE$ can contain only finite many $\varepsilon$--distinguishable points. For a given $\varepsilon>0$,
the number of sets $Y_k$ in a covering family depends on the family, but the minimal value of $n$, $N_\varepsilon(\cE)\doteq\min n$,
is an invariant of the set $\cE$, which depends only on $\varepsilon$. Its logarithm, that is, the function
$H_\varepsilon(\cE)\doteq\log N_\varepsilon(\cE)$ is the $\varepsilon$--entropy of the set $\cE$,
and gives the length of the binary sequence from which a signal in $\cE$ can be reconstructed up to $\varepsilon$ accuracy.
Analogously, the number $m$ in definition (b') above depends on the choice of the points, but its maximum
$M_\varepsilon(\cE)\doteq\max m$, is an invariant of the set $\cE$, and represents the maximum number of
$\varepsilon$--distinguishable messages that can be conveyed back in the backward channel to reconstruct the object:
i.e., the maximum number of those data which satisfy the inequalities:
$\|\og^{(i)}-\og^{(k)}\|_Y>\varepsilon$ for all $i\neq k$, $\og^{(i)},\og^{(k)}\in\cE$.
Its logarithm, that is, the function $C_\varepsilon(\cE)=\log M_\varepsilon(\cE)$, is the $\varepsilon$--capacity of the set $\cE$,
and provides the length (in binary units) of the messages that can be reliably transmitted
in the backward channel.

The following inequalities hold \cite{Kolmogorov,DeMicheli3}:
\beq
H_\varepsilon(\cE) \leqslant C_\varepsilon(\cE) \leqslant H_{\varepsilon/2}(\cE).
\label{1.3}
\eeq
Then, in order to obtain estimates for the $\varepsilon$--capacity $C_\varepsilon(\cE)$, our aim is now to look for a
lower bound for $H_\varepsilon(\cE)$ and an upper bound for $H_{\varepsilon/2}(\cE)$. For this purpose, let us consider
the finite dimensional subspace $Y_{k_\I}$ of $Y$, spanned by the first $k_\I+1$ axes of $\cE$, and put
$\cE_{k_\I}=\cE \cap Y_{k_\I}$. Then, $\cE_{k_\I}$ is a finite dimensional ellipsoid whose volume is just
$\prod_{k=0}^{k_\I}\lambda_k$ times the volume $\Omega_{k_\I}$ of the unit ball in $Y_{k_\I}$. Since the volume of
an $\varepsilon$--ball in $Y_{k_\I}$ is $\varepsilon^{(k_\I+1)}\Omega_{k_\I}$, we see that in order to cover the ellipsoid
$\cE$ by the $\varepsilon$--balls we shall need at least $\prod_{k=0}^{k_\I}(\lambda_k/\varepsilon)$ such balls.
From this it follows that \cite{Prosser,Gelfand1}:
\beq
\prod_{k=0}^{k_\I} \frac{\lambda_k}{\varepsilon} \leqslant N_\varepsilon(\cE),
\label{2.3}
\eeq
and, therefore, we have the following lower bound for the $\varepsilon$--entropy $H_\varepsilon(\cE)$:
\beq
\sum_{k=0}^{k_\I} \log\frac{\lambda_k}{\varepsilon} \leqslant \log N_\varepsilon(\cE) = H_\varepsilon(\cE).
\label{3.3}
\eeq
The determination of an upper bound for $H_{\varepsilon/2}(\cE)$ is more involved, and we limit ourselves to
report the result \cite{Prosser,Gelfand1}:
\beq
H_{\varepsilon/2}(\cE) \leqslant
k_\I\left(\frac{\varepsilon}{4}\right)
\left\{\log\left(\frac{1}{\varepsilon}\right)+\log 6 + \frac{1}{2}\log k_\I\left(\frac{\varepsilon}{4}\right)\right\},
\label{4.3}
\eeq
where $k_\I(\varepsilon/4)$ represents the number of terms in the sequence $\{\lambda_k\}_{k=0}^\infty$
which are larger or equal to $(\varepsilon/4)$.

Now, we come back to the optical problem, specifically to Eq. \eqref{3.1}, and investigate the behavior of the
$\varepsilon$--entropy $H_{\varepsilon}(\cE)$ in the limit of low level of noise.
Assuming that the Shannon number $S=\Omega X_0/\pi$ is sufficiently large, the eigenvalues $\lambda_k$ can be
approximated with 1 for $k\leqslant S$ (see, e.g., Fig. \ref{fig:2}), whereas, for $k>S$, the
eigenvalues $\lambda_k$ fall off to zero exponentially \cite{Bertero1}.
Consider now the bound in \eqref{3.3}; for $\varepsilon$ sufficiently small, we have $k_\I(\varepsilon) > S$, and the
sum in \eqref{3.3} can be split into two parts:
\beq
\sum_{k=0}^{k_\I} \log\frac{\lambda_k}{\varepsilon}
= \sum_{k=0}^{\lfloor S\rfloor-1} \log\frac{\lambda_k}{\varepsilon} + \sum_{k=\lfloor S \rfloor}^{k_\I} \log\frac{\lambda_k}{\varepsilon},
\label{3.3bis}
\eeq
where the symbol $\lfloor x \rfloor$ stands for the integral part of $x$.
Since for $k<S$ we have $\lambda_k \simeq 1$, the contribution of the first sum on the r.h.s. of \eqref{3.3bis}
is about $S\log(1/\varepsilon)$. Instead, for $k \geqslant S$ we have $\lambda_k \simeq \varepsilon$, so that the
second sum on the r.h.s. of \eqref{3.3bis} is nearly null. Then, from \eqref{4.3} we obtain the following lower bound
for the $\varepsilon$--entropy:
\beq
H_\varepsilon(\cE) \sim S\,\log\left(\frac{1}{\varepsilon}\right).
\label{3.3tris}
\eeq
Therefore, we can conclude that the maximum number of $\varepsilon$--distinguishable messages, which
can be conveyed back from the image to recover the object, at least should be:
\beq
M_\varepsilon(\cE) \gtrsim 2^{\,S\log(1/\varepsilon)}
\xrightarrow[\varepsilon\rightarrow 0] {} \infty.
\label{8.3}
\eeq

Next, we can consider formula \eqref{4.3}, which limits superiorly the number of
$\varepsilon$--distinguishable messages.
First we note that the eigenvalues $\lambda_k$ decrease exponentially for $k\to\infty$; precisely, we have \cite{Bertero1}:
$\lambda_k = \mathrm{O}(\exp[-2k\log(k/c)]/k)$, $c=\mathrm{constant}$. Then,
it follows that, for $\varepsilon\to 0$, $k_\I(\varepsilon/4) \sim \frac{1}{2}\log(1/\varepsilon)$,
and the leading term, for $\varepsilon\to 0$, on the r.h.s. of \eqref{4.3} is:
$k_\I(\varepsilon/4)\log(1/\varepsilon)$. We thus have:
\beq
H_{\varepsilon/2}(\cE)
\, \staccrel{\sds\sim}{\varepsilon\to 0} \,
k_\I\left(\frac{\varepsilon}{4}\right)\log\left(\frac{1}{\varepsilon}\right)
\sim\frac{1}{2}\log^2\left(\frac{1}{\varepsilon}\right).
\label{6.3}
\eeq

Summarizing, from \eqref{1.3}, \eqref{3.3tris}, and \eqref{6.3} we have, for $\varepsilon$ sufficiently small:
\beq
S\,\log\left(\frac{1}{\varepsilon}\right) \lesssim C_\varepsilon(\cE) \lesssim \frac{1}{2}\log^2\left(\frac{1}{\varepsilon}\right).
\label{111.3}
\eeq
These latter inequalities require: $S < \frac{1}{2}\log\left(\frac{1}{\varepsilon}\right)$, that is,
$\varepsilon<2^{-2S}$. In other words, this means that as long as the noise level is not too small, i.e. for $\varepsilon>2^{-2S}$,
the $\varepsilon$--capacity is essentially: $C_\varepsilon(\cE) \simeq S\,\log\left(1/\varepsilon\right)$
(to have a flavor of the numbers, for the operator $A$ whose eigenvalues are shown in Fig. \ref{fig:2}, with $S=12.7$, this approximation of the
the $\varepsilon$--capacity holds for $\varepsilon \gtrsim 10^{-7.6}$ or, equivalently, for a
signal--to--noise ratio: $(E/\varepsilon) \lesssim 76 \,\mathrm{dB}$).
Instead, when the noise gets smaller, i.e. for $\varepsilon<2^{-2S}$, the $\varepsilon$--capacity may increase faster when $\varepsilon\to 0$,
remaining (approximately) within the range specified by inequalities \eqref{111.3}.

\section{Comparing probabilistic and topological information theory}
\label{se:comparison}

Let us return now to the probabilistic regularization methods, and evaluate the amount of information
on the random variable $\xi_k$, which is contained in the random variable $\eta_k$; we have \cite{Gelfand2}:
\beq
J\left(\xi_k,\eta_k\right)=-\frac{1}{2}\ln(1-r_k^2) \qquad (k=0,1,2,\ldots),
\label{1.4}
\eeq
($\ln x$ denotes the logarithm of $x$ to the base $e$), where $r_k$ is given by:
\beq
r_k^2 = \frac{\left|\rmE\left\{\xi_k,\eta_k^*\right\}\right|^2}{\rmE\left\{|\xi_k|^2\right\} \rmE\left\{|\eta_k|^2\right\}}
= \frac{(\lambda_k\rho_k)^2}{(\lambda_k\rho_k)^2 + (\varepsilon\nu_k)^2}  \qquad (k=0,1,2,\ldots),
\label{2.4}
\eeq
and the equality $R_{\xi\eta}=R_{\xi\xi}A^*$ (see \eqref{20.2}) has been used. From \eqref{1.4} and \eqref{2.4} it follows:
\beq
J\left(\xi_k,\eta_k\right)=\frac{1}{2}\ln\left(1+\frac{\lambda_k^2\rho_k^2}{\varepsilon^2\nu_k^2}\right)
 \qquad (k=0,1,2,\ldots).
\label{3.4}
\eeq
Let us now consider the sets $\cI$ and $\cN$, defined in \eqref{IN.a} and \eqref{IN.b}.
We see that, for the random variables $\xi_k$ and $\eta_k$
whose $k$--values belong to the set $\cN$, Eq. \eqref{3.4} gives:
\beq
J\left(\xi_k,\eta_k\right) < \frac{1}{2} \ln 2 \qquad (k\in\cN).
\label{4.4}
\eeq
We can thus say that, in the components $\eta_k$ whose values of $k$ belong to the set $\cN$
(for simplicity we write $\eta_k\in\cN$), the noise is prevailing and therefore they can be neglected in the approximate
reconstruction of the object, in agreement with formula \eqref{29.a.2}. \\
Conversely, the components $\eta_k\in\cI$
contain a significant amount of information on the corresponding components $\xi_k$. We can thus write, with obvious notation:
\beq
\oJ \doteq \sum_{k\in\cI}J\left(\xi_k,\eta_k\right)=
\sum_{k\in\cI}\ln\sqrt{1+\frac{\lambda_k^2\rho_k^2}{\varepsilon^2\nu_k^2}}.
\label{5.4}
\eeq

\begin{remark}
The quantity $\oJ$ in \eqref{5.4} is not the total information $J(\xi,\eta)$. In fact, the pairs
$\{\xi_i,\eta_j\}$ ($i\neq j$) are not mutually independent. A linear coordinate transformation
could always been chosen in such a way that all the components 
$\{\xi,\eta\}=\{\txi_0,\txi_1,\ldots,\txi_k;\teta_0,\teta_1,\ldots,\teta_k\}$
(with the exception of the pairs $\{\txi_j,\teta_j\}$, ($j=0,1,2,\ldots,k$)) are mutually independent.
But this would imply to introduce a basis $\{\tpsi_k\}_{k=0}^\infty$, which differs from that obtained
by the eigenfunctions $\{\psi_k\}_{k=0}^\infty$ of the operator $A$ that we used in the derivation of
the probabilistic regularization methods. Therefore, we limit ourselves to evaluate $\sum_{k\in\cI}J(\xi_k,\eta_k)$,
which does not provide the total amount of information $J(\xi,\eta)$ but represents only an approximation of it.
\label{rem:3}
\end{remark}

Next, we make the following approximation:
\beq
\oJ =
\sum_{k\in\cI} \ln\sqrt{1+\frac{\lambda_k^2\rho_k^2}{\varepsilon^2\nu_k^2}} \simeq
\sum_{k\in\cI} \ln \left|\frac{\lambda_k\rho_k}{\varepsilon\nu_k}\right|,
\label{6.4}
\eeq
which is admissible if $\lambda_k\rho_k\geqslant\varepsilon\nu_k$: i.e., for the components $\eta_k\in\cI$.
We now assume that: $\rho_k\sim\nu_k$ for $k\in\cI$. Then, from \eqref{6.4} we obtain:
\beq
\oJ = \sum_{k\in\cI} J(\xi_k,\eta_k) \simeq \sum_{k\in\cI} \ln\frac{\lambda_k}{\varepsilon}.
\label{7.4}
\eeq
In particular, let us note that from the assumption $\rho_k\sim\nu_k$ (for $k\in\cI$) it follows that the set
$\cI$ is composed of those components such that $\lambda_k\geqslant\varepsilon$, which is precisely
the truncation criterion \eqref{15.2} (with $E=1$) which generates the approximation $\of^{\,(3)}$.
Thus, from \eqref{7.4} we have:
\beq
\sum_{k\in\cI} J(\xi_k,\eta_k) \simeq \sum_{k\in\cI} \ln\frac{\lambda_k}{\varepsilon} =
\sum_{k=0}^{k_\I}\ln \frac{\lambda_k}{\varepsilon},
\label{8.4}
\eeq
which coincides with the lower bound on the $\varepsilon$--capacity (see Eq. \eqref{3.3}) up to an
immaterial conversion factor between logarithms to different bases. Again, as we made for obtaining formula
\eqref{3.3tris}, we have $\lambda_k\simeq 1$ for $k\leqslant S$, which finally yields:
\beq
\oJ=\sum_{k\in\cI} J(\xi_k,\eta_k) \simeq
S\,\ln\left(\frac{1}{\varepsilon}\right).
\label{9.4}
\eeq
Correspondingly, the maximum number of $\varepsilon$--distinguishable messages which can be conveyed back
in the backward channel from the image to recover the object, can therefore written as (neglecting the
conversion factor between $\log x$ and $\ln x$):
\beq
M_\varepsilon(\cE) =
2^{\,C_\varepsilon(\cE)} \simeq
2^{\,S\log(1/\varepsilon)} \simeq
2^\oJ =
2^{\,\left\{\sum_{k\in\cI}J(\xi_k,\eta_k)\right\}},
\label{12.4}
\eeq
which, as expected, tends to infinity as $\varepsilon$ tends to zero.

Returning to Eq. \eqref{6.4}, let us now make the following assumption: $\lambda_k\rho_k \sim \nu_k$ for $k\in\cI$. We
have:
\beq
\oJ \simeq \sum_{k\in\cI} \ln\left|\frac{\lambda_k\rho_k}{\varepsilon\nu_k}\right|
\simeq\sum_{k\in\cI}\ln\left(\frac{1}{\varepsilon}\right)
=k_\I(\varepsilon)\ln\left(\frac{1}{\varepsilon}\right).
\label{13.4}
\eeq
Now, recalling that the sequence of eigenvalues $\lambda_k$ falls off exponentially to zero for $k$ sufficiently large, from \eqref{13.4} we obtain:
\beq
\oJ \simeq
k_\I(\varepsilon)\ln\left(\frac{1}{\varepsilon}\right) \simeq \frac{1}{2}\ln^2\left(\frac{1}{\varepsilon}\right),
\label{14.4}
\eeq
which coincides with the upper bound on the $\varepsilon$--capacity given in \eqref{6.3}.

Summarizing, we see that for a given (small) level of noise $\varepsilon$,
the two extremal cases for the maximum number of $\varepsilon$--distinguishable data--messages which represent the information
that can be sent back through the backward channel to reconstruct the object, are related to the spectral distribution of the noise.
The lower limit is obtained when, for $k\in\cI$, the spectral distribution of the noise (i.e., $\nu_k$) \emph{coincides} with the distribution of the
object (i.e., $\rho_k$). The upper bound corresponds to the case when, for $k\in\cI$, the spectral distribution of the noise
\emph{coincides} with that of the image (i.e., $\lambda_k\rho_k$).

\section{Conclusions}
\label{se:conclusions}

Let us start from the classical Whittaker--Kotel'nikov--Shannon sampling theorem \cite{Jerri}, which states that a function,
whose Fourier transform vanishes outside a certain interval of length $2\Omega$, can be reconstructed by a discrete collection
of its values, chosen in arithmetic progression with difference $\pi/\Omega$. Since the image $g(y)$ is a band--limited
function, it could, in principle, be reconstructed by an infinite collection of its samples, taken at equidistant points
spaced $\pi/\Omega$ apart. More realistically, the image $g(y)$ can be reconstructed in an interval of length $X_0$ by a finite
collection $S=\Omega X_0/\pi$ of its samples. The classical Rayleigh resolution distance $R$ equals the
Nyquist distance $\pi/\Omega$, while the Shannon number $S$ turns out to be given by
$\mathrm{Trace}\,(A) = \sum_{k=0}^\infty \lambda_k$ \cite{Kato,Gori}.

Since both the image $g(y)$ and the Fourier transform of the object $F(\omega)$ are entire functions in the
complex variables $y$ and $\omega$ respectively, they can be analytically continued beyond the interval where they
are known. Consider, for instance, $F(\omega)$: in principle, it might be possible to extrapolate the function outside the data band
$[-\Omega,\Omega]$ by making use of appropriate regularization methods of ill--posed problems, and then to find
an estimate of it over a broader band, say, $[-W,W]$. This would imply a better resolution $\pi/W$:
this improvement can be called \emph{super--resolution}. In fact, it has been shown that whenever the Shannon number is not too large
(i.e., not much greater than unity) the behavior of the eigenvalues $\lambda_k$ is not similar to that of a step
function (see Fig. \ref{fig:2}), and therefore, the extrapolation of $F(\omega)$ out of band is indeed possible \cite{Bertero2}.

We have focused on aspects of the problem by analyzing the inverse imaging problem from two different viewpoints:
the classical information theory based on probabilistic methods,
and the Kolmogorov's $\varepsilon$--capacity (and entropy), which can be thought of as a form of information theory based on topological concepts.
The main results obtained, if a few conditions (specified at the beginning of Section \ref{se:capacity}) are satisfied,
can be summarized in the following points:
\begin{itemize}
\item[(a)] The $\varepsilon$--capacity of the image data set is essentially given by:
\beq
C_\varepsilon(\cE) \sim S\,\log\left(\frac{1}{\varepsilon}\right),
\label{1.5}
\eeq
where $S$ is the Shannon number. Consequently, the maximum number of $\varepsilon$--distinguishable messages which can be conveyed
back in the backward channel from the image to reconstruct the object is given by:
\beq
M_\varepsilon(\cE) \sim 2^{\,S\log(1/\varepsilon)}.
\label{2.5}
\eeq
\item[(b)] For $\varepsilon$ sufficiently small, i.e. $\varepsilon \lesssim 2^{-S}$, the $\varepsilon$--capacity is bounded above by:
\beq
C_\varepsilon(\cE) \lesssim \frac{1}{2}\log^2\left(\frac{1}{\varepsilon}\right) \xrightarrow[\varepsilon\rightarrow 0] {} \infty.
\label{2.5bis}
\eeq
\item[(c)] The upper and lower bounds on the information content of the noisy image (i.e., $C_\varepsilon(\cE)$)
obtained by the topological information theory may be interpreted
within the framework of the probabilistic information theory. In fact, the sum $\oJ$ of the information contained in the random
variables $\eta_k$, which represent the noisy image,
on the corresponding random variable $\xi_k$, which represent the object, is given by:
\begin{itemize}
\item[$(\mathrm{c}_1)$] If, for $k\in\cI$, the spectral distribution of the noise is as that of the object, i.e. $\nu_k \sim \rho_k$:
\begin{equation}
\oJ=\sum_{k\in\cI} J(\xi_k,\eta_k) \simeq S\,\ln\left(\frac{1}{\varepsilon}\right).
\label{3.5}
\eeq
\item[$(\mathrm{c}_2)$] If, for $k\in\cI$, the spectral distribution of the noise is as that of the image, i.e. $\nu_k \sim \lambda_k\rho_k$:
\beq
\oJ \simeq \frac{1}{2}\ln^2\left(\frac{1}{\varepsilon}\right).
\label{3.5bis}
\eeq
\end{itemize}
\item[(d)] The maximum number of $\varepsilon$--distinguishable messages which can be conveyed back from the
image to reconstruct the object is given by:
\beq
M_\varepsilon(\cE) =
2^{C_\varepsilon(\cE)} \sim
2^{\,\left\{\sum_{k\in\cI}J(\xi_k,\eta_k)\right\}}
\xrightarrow[\varepsilon\rightarrow 0] {} \infty.
\label{4.5}
\eeq
\end{itemize}

\end{document}